\theoremstyle{plain}
\newtheorem{proposition}[theorem]{Proposition}
\newtheorem{question}[theorem]{Question}
\newcommand{\nc}{\newcommand}
\nc{\rnc}{\renewcommand}
\nc{\beq}{\begin{equation}}
\nc{\eeq}{{\end{equation}}}
\nc{\beqa}{\begin{eqnarray}}
\nc{\eeqa}{\end{eqnarray}}
\nc{\beeqa}{\begin{eqnarray*}}
\nc{\eeeqa}{\end{eqnarray*}}
\nc{\lbar}[1]{\overline{#1}}
\nc{\bra}[1]{\langle#1|}
\nc{\ket}[1]{|#1\rangle}
\nc{\ketbra}[2]{|#1\rangle\!\langle#2|}
\nc{\braket}[2]{\langle#1|#2\rangle}
\nc{\proj}[1]{| #1\rangle\!\langle #1 |}
\nc{\avg}[1]{\langle#1\rangle}
\nc{\Rank}{\operatorname{Rank}}
\nc{\smfrac}[2]{\mbox{$\frac{#1}{#2}$}}
\nc{\tr}{\operatorname{Tr}}
\nc{\ox}{\otimes}
\nc{\dg}{\dagger}
\nc{\dn}{\downarrow}
\nc{\cA}{{\cal A}}
\nc{\cB}{{\cal B}}
\nc{\cC}{{\cal C}}
\nc{\cD}{{\cal D}}
\nc{\cE}{{\cal E}}
\nc{\cF}{{\cal F}}
\nc{\cG}{{\cal G}}
\nc{\cH}{{\cal H}}
\nc{\cI}{{\cal I}}
\nc{\cJ}{{\cal J}}
\nc{\cK}{{\cal K}}
\nc{\cL}{{\cal L}}
\nc{\cM}{{\cal M}}
\nc{\cN}{{\cal N}}
\nc{\cO}{{\cal O}}
\nc{\cP}{{\cal P}}
\nc{\cR}{{\cal R}}
\nc{\cS}{{\cal S}}
\nc{\cT}{{\cal T}}
\nc{\cX}{{\cal X}}
\nc{\cY}{{\cal Y}}
\nc{\cZ}{{\cal Z}}
\nc{\csupp}{{\operatorname{csupp}}}
\nc{\qsupp}{{\operatorname{qsupp}}}
\nc{\var}{{\operatorname{var}}}
\nc{\rar}{\rightarrow}
\nc{\lrar}{\longrightarrow}
\nc{\polylog}{{\operatorname{polylog}}}
\nc{\wt}{{\operatorname{wt}}}
\nc{\av}[1]{{\left\langle {#1} \right\rangle}}
\nc{\FF}{{{\mathbb F}}}
\nc{\NN}{{{\mathbb N}}}
\nc{\ZZ}{{{\mathbb Z}}}
\nc{\PP}{{{\mathbb P}}}
\nc{\QQ}{{{\mathbb Q}}}
\nc{\UU}{{{\mathbb U}}}
\nc{\EE}{{{\mathbb E}}}
\nc{\id}{{\operatorname{id}}}
\nc{\CHSH}{{\operatorname{CHSH}}}
\nc{\be}{\begin{equation}}
\nc{\ee}{{\end{equation}}}
\nc{\bea}{\begin{eqnarray}}
\nc{\eea}{\end{eqnarray}}
\nc{\Hom}[2]{\mbox{Hom}( {\bf C}^{#1}, {\bf C}^{#2})}
\nc{\rU}{\mbox{U}}
\nc{\ob}[1]{#1}
\def\mat{Mat\'{u}\v{s} }
\def\ovb{\overline}
\def\ot{\otimes}
\def\tr{{\rm Tr} \, }
\def\trp{{\rm Tr} }
\def\half{\tfrac{1}{2}}
\def\wtd{\widetilde}
 \def\ds{\displaystyle}
\def\ing{{\rm Ing}}
\def\nn{\nonumber}
\def\wh{\widehat}
\title{The Quantum Entropy Cone of Stabiliser States}  
\titlerunning{Entropy Inequalities for Stabiliser states} 
\author[1]{Noah Linden}
\author[2]{Franti\v{s}ek Mat\'u\v{s}}    
\author[3,4]{Mary Beth Ruskai}
\author[5,1,6]{Andreas Winter}
\affil[1]{School of Mathematics, University of Bristol  \\ Bristol BS8 1TW, United Kingdom \\
  \texttt{n.linden@bristol.ac.uk}}
  \affil[2]{Institute of Information Theory and Automation \\ Academy of Sciences of the Czech Republic  \\ 
        Prague, Czech Republic\\
  \texttt{matus@utia.cas.cz}}
  \affil[3]{Institute for Quantum Computing, University of Waterloo\\
  Waterloo, Ontario, Canada\\
  \texttt{mbruskai@gmail.com} }
      \affil[4]{Tufts University, Medford, MA 02155 USA \\  \texttt{marybeth.ruskai@tufts.edu}}
  \affil[5]{ICREA \&{} F\'{\i}sica Te\`{o}rica: Informaci\'{o} i Fenomens Qu\`{a}ntics \\ Universitat Aut\`{o}noma de Barcelona\\ ES-08193 Bellaterra (Barcelona), Spain \\
    \texttt{andreas.winter@uab.cat}}
    \affil[6]{Centre for Quantum Technologies, National University of Singapore \\
 2 Science Drive 3, Singapore 117542, Singapore}
\date{24 June 2013}
\authorrunning{Linden,  Mat\'u\v{s},  Ruskai and Winter} 
\subjclass{94A17, 81P45,54C70}
\keywords{Entropy inequalities, Stabiliser states, Ingleton inequality}
\begin{document}

\maketitle

\begin{abstract}
We investigate the universal linear inequalities that hold for the
von Neumann entropies in a multi-party system, prepared in a stabiliser state.
We demonstrate here that entropy vectors for stabiliser states
satisfy, in addition to the classic inequalities,
 a type of linear rank inequalities associated with  the 
combinatorial structure of normal  subgroups of certain matrix groups.

 In the 4-party case, there is only one such inequality, the
so-called Ingleton inequality.
For  these systems we  show that strong
subadditivity, weak monotonicity and Ingleton inequality exactly 
characterize the entropy cone for stabiliser states.
 \end{abstract}

\section{Introduction}    \label{sect:Intro}

\subsection{Background}   \label{sect:backgrnd}

Undoubtedly, the single most important quantity in (classical) information
theory is the Shannon entropy, and its properties play a key role: for
a discrete probability distribution $p$ on $\cT$
\begin{align}      \label{eqdef:H}
   H(p) =    - \sum_{ t \in \cT } p(t)\log p(t) ~.   \end{align}
 The quantum (von Neumann) entropy is understood to be 
of equal importance to quantum information: for a quantum state
(density operator) $\rho \geq 0$, $\tr\rho =1$
\beqa  \label{eqdef:S}
   S(\rho) = -\tr\rho\log\rho 
\eeqa
which reduces to \eqref{eqdef:H} when $\rho$ is diagonal.

For $N$-party systems, one can apply these definitions to obtain
the entropy of all marginal probability distributions (in the classical case)
and reduced density operators (aka quantum marginals) in the quantum
case.   The collection of these entropies can be regarded as a vector
in ${\bf R}_{2^N}$, and the collection of all such vectors forms a set
whose closure is a  convex  cone.   It is an interesting open question
to determine the inequalities which characterize this cone.   As discussed in
Section~\ref{nonShan},
  it is now known   that in the  classical setting the Shannon inequalities
given below do not suffice; they describe a strictly larger cone.

This work has motivated us to consider analogous questions for the
von Neumann entropy in $N$-party quantum systems.   Although we
are unable to answer this question, we can fully characterize
the cone associated with a subset of quantum states known
as stabiliser states in the 4-party case.   Moreover, we can 
show that for any number of parties, entropy vectors for
stabiliser states satisfy additional inequalities in the
class known as linear rank inequalities discussed 
in Section~\ref{sect:linrank}.   In the classical setting, distributions
whose entropies satisfy this subclass of stronger inequalities, suffice
to achieve maximum throughput in certain network coding problems \cite{LYC}.

\subsection{Classic inequalities and Definitions}   \label{sect:classic}

It is well-known that the classical Shannon entropy for an
$N$-party classical probability distribution $p$ on a discrete
space $\cT_1\times\cdots\times\cT_N$, has the following properties,
commonly known as the \emph{Shannon inequalities}:
\begin{enumerate}
  \item It is non-negative, \emph{i.e.} $H(A) \geq 0$; $H(\emptyset)=0$. \hfill (+)

  \item It is strongly subadditive (aka submodular), \emph{i.e.}
    \begin{equation}
      \label{ssa}
      H(A) + H(B) - H(A\cap B) - H(A\cup B) \geq 0.  \tag{SSA}
    \end{equation}

  \item It is monotone non-decreasing, \emph{i.e.}
    \begin{equation}
      \label{mono}
      A \subset B \quad \Longrightarrow \quad H(A) \leq H(B).  \tag{MO}
    \end{equation}
\end{enumerate}
where $H(A)$ denotes the  entropy 
$H(p_A)$ of the marginal distribution $p_A$ on 
$\cT_A = \bigotimes_{j\in A} \cT_j$.

 
The monotonicity property \eqref{mono} implies that if   
$H(A) = 0 $  then $H(B) = 0$ for all $B \subset A$
and, thus, $p_A = \bigotimes_{j\in A} \delta_{t_j}$ is a product of
point masses. 
Some of the most remarkable features of quantum
systems arise when \eqref{mono} is violated.
Indeed, for a pure entangled state $\rho_{AB}  = \proj{\psi}_{AB}$
for which $ S(\rho_{AB}) = 0$, but the entropy of the reduced states  $\rho_A = \trp_B  \, \rho_{AB}$
and $ \rho_B = \trp_A \, \rho_{AB}$   can be (and usually is) strictly positive.
In fact, $S(\rho_{AB})  - S(\rho_A)$ can be as large as $-\log d$, 
where $d$ is the Hilbert space dimension of the smaller of $A$ and $B$.

For multi-party quantum systems, (+) and \eqref{ssa} are still
valid~\cite{LiebRuskai}, but \eqref{mono} has to be replaced by the 
third property below -- in analogy to the classical case, we call them
Shannon inequalities:

\begin{enumerate}
  \item Non-negativity: $S(A) \geq 0$; $S(\emptyset)=0$. \hfill (+)

  \item Strong subadditivity:
    \begin{equation}
      \label{eq-ssa}
      S(A) + S(B) - S(A\cap B) - S(A\cup B) \geq 0.  \tag{SSA}
    \end{equation}

  \item Weak monotonicity:
    \begin{equation}
      \label{eq:WMO}
      S(A) + S(B) - S(A\setminus B) - S(B\setminus A) \geq 0. \tag{WMO}
    \end{equation}
\end{enumerate}
However, in contrast to the classical setting, this weaker
version of monotonicity is not completely independent
of strong subadditivity \eqref{ssa}. In fact, it can be obtained
from the latter by the (non-linear) process known as \emph{purification} 
described in Section~\ref{sect:pure}.
Using a slight abuse of notation, we use $I(A:B) $ and $I(A:B|C) $ to denote,
respectively,
the mutual information and conditional mutual information for both classical
and quantum systems, defined explicitly  in the latter case as
\begin{align*}
   I(A:B)   & = S(A) + S(B) - S({AB}) , \\
   I(A:B|C) & = S({AC}) + S({BC}) - S({ABC}) - S(C) ,
\end{align*}
for pairwise disjoint sets $A$, $B$, $C$.
Note that SSA can then be written as $I(A:B|C) \geq 0$.

\subsection{Entropy cones and non-Shannon inequalities}   \label{nonShan}

The first non-Shannon entropy inequality was obtained in 1997-98 by
Yeung and Zhang~\cite{Yeung:framework,ZhangYeung-1,ZhangYeung-2} 
for $4$-party systems.
Their work established that the classical entropy  cone 
is strictly smaller than the polyhedral cone defined by the Shannon  inequalities.
This was  the only non-Shannon inequality  known until 2006, when
Dougherty, Freiling and Zeger~\cite{DFZ2006,DFZ2007}
used a computer search to generate new inequalities.
Then Mat\'{u}\v{s}~\cite{Matus} found two infinite families,  one of which can be written as 
\beqa   \label{eq:Matus-s}
    t \,   \ing(AB:CD)   +  I(A:B|D)    + \frac{t (t +1)}{2} \bigl[ I(B:D|C) + I(C: D|B) ) \big]  & \geq  &0  
 \eeqa
where  $t$ is a non-negative integer, and   $ \ing(AB:CD) $ is defined in \eqref{eq:ingleton} below.
The case $t = 1$ in  \eqref{eq:Matus-s} yields the  inequality in \cite{ZhangYeung-2}.
Moreover, either of the Mat\'{u}\v{s} families can be used  to show that the 4-party
entropy cone is not polyhedral.
In     \cite{DFZ2011}  additional non-Shannon inequalities were found.   

In the quantum setting, Lieb~\cite{Lieb-inequalities} considered the question
of additional inequalities in a form that could be regarded as extending  SSA 
to more parties, but found none.    Much later Pippenger~\cite{Pippenger} rediscovered
one of Lieb's results and used it to show constructively that there are no additional
inequalities for 3-party systems.
He also explicitly raised the question of whether or not additional inequalities
hold for more parties.   Despite the fact that \eqref{eq-ssa} is still the only 
known inequality, it has been shown that for 4-party systems
there are constrained inequalities 
 \cite{CLW, LindenWinter:new} that do not follow from SSA.
  (Numerical evidence for additional inequalities  is
  given in the thesis of Ibinson \cite{Ibinson-PhD}.)

\subsection{Structure of the paper}    \label{sect:struc}

This paper is organized as follows.   In Section~\ref{sect:prelim}
we give some basic notation and review some well-known facts.
  In Section~\ref{sect:linrank} we
discuss what is known about linear rank inequalities 
beginning with the Ingleton inequality in Section~\ref{sect:ing}
and concluding with a discussion of their connection to
the notion of common information in Section~\ref{comminfo}.
In Section~\ref{sect:stab} we discuss stabiliser states,
beginning with some basic definitions in Section~\ref{stabdefs}.
In Section~\ref{sect:stab.ent} we consider the entropies of
stabiliser states, showing half of our main result that pure 
stabiliser states generate entropy vectors which satisfy the
Ingleton inequality and a large class of other linear rank inequalities.
In Section~\ref{sect:4party} we prove the other half, \emph{i.e.}, 
that all extremal rays of the 4-party Ingleton cone can be
achieved using 5-party stabiliser states.   We conclude with
some open questions and challenges.

\section{Preliminaries}  \label{sect:prelim}

\subsection{Notation}  \label{sect:notation}

We now introduce some notation needed to make precise
the notion of entropy vectors and entropy cones.
We will let $\cX = \{A, B, C, \ldots \}$ denote an index set
of finite size $|\cX|=N$ so that in many cases we could just
assume that $\cX =  \{1, 2, \ldots N\}$. 
However, it will occasionally be useful to consider
the partition  of some the index set into smaller groups,
e.g, by grouping $A$ and $B$ as well as $D$ and $E$,
$\cX_5 = \{ A,B,C,D,E \}$ gives rise to a $3$-element
$\cX_3 = \{ AB,C,DE \}$. 
When the size of $\cX$ is important, we write $\cX_N$.  

An arbitrary $N$-partite
quantum system is associated with a Hilbert space
$ \cH = \bigotimes_{x\in{\cal X}} \cH_x$ 
(with no restrictions on the dimension  of the Hilbert spaces $\cH_x$)
with $|{\cal X}|=N$.
The  reduced states (properly called reduced density operators, but more
often referred to as  reduced density matrices (RDM) and also
known as quantum marginals) 
are given by $\rho_J = \tr_{J^c} \rho$, where
$J^c = {\cal X}\setminus J$. This gives rise to a function
$   S: J \longmapsto S(J)= S(\rho_J)   $ on the subsets $J \subset {\cal X}$.  
An element of the output of $S$ can be viewed as a vector in ${\bf R}^{2^{\cX}}$,
whose coordinates are indexed by the power set $2^{\cX}$ of $\cX$.
We study the question of which such vectors arise from classical or
quantum states, \emph{i.e.}, when their elements are given by the entropies
$S(\rho_J)$ of the reduced states of some $N$-party quantum state.

Classical probability distributions can be embedded into the
quantum framework by restricting density matrices to those
which are diagonal in a fixed product basis.
A function $H:2^\cX \rightarrow {\bf R}$,
associating real numbers to the subsets of a finite set $\cX$, which 
satisfies the Shannon inequalities, eqs.~(+), \eqref{ssa} and \eqref{mono},
is called \emph{poly-matroid}. 
By analogy with poly-matroids, we propose to call a function
$S:2^\cX \rightarrow {\bf R}$ a \emph{poly-quantoid}, if it satisfies (+), \eqref{ssa}
and \eqref{eq:WMO}~\cite{M.poly}. 

  We will let  $\Gamma_\cX^C$ and $\Gamma_\cX^Q$ denote, respectively, the
  convex cone of vectors in a poly-matroid or poly-quantoid.   The existence of non-Shannon
  entropy inequalities implies that there are vectors in $\Gamma_\cX^C$
  which can not be achieved by any classical state.    Neither the classical nor quantum
   set of true entropy
  vectors is  convex, because their boundaries have a complicated structure
  \cite{CLW,LindenWinter:new,Mbd,Pippenger}.
    However, the closure of the set of classical or quantum entropy vectors,
   which we denote $\ovb{\Sigma}^C_\cX $ or    $\ovb{\Sigma}^Q_\cX$,    respectively,
  is a closed convex cone.  The inclusion $ \ovb{\Sigma}^C_\cX   \subset    \Gamma_\cX^C$ 
   is strict for $N \geq 4$~\cite{ZhangYeung-2}. 
   It is an important open question whether or not this also holds in
   the quantum setting, \emph{i.e.}, is the inclusion
   $ \ovb{\Sigma}^Q_\cX   \subseteq    \Gamma_\cX^Q$ also  strict?
   
In this paper, we consider entropy vectors which satisfy additional
inequalities known as linear rank inequalities, i.e. those satisfied by
the dimensions of subspaces of a vector space and their intersections.
A poly-matroid $H$ is called \emph{linearly represented} if $H(J) = \dim \sum_{j\in J} V_j$
for subspaces $V_j$ of a common vector space $V$. 
  
The simplest linear rank inequality is the $4$-party Ingleton inequality
(see section~\ref{sect:linrank} below).
Poly-matroids and poly-quantoids which
also satisfy these additional inequalities will be denoted
$\Lambda_\cX^C$ and $\Lambda_\cX^Q$ respectively.

\subsection {Purification and complementarity}  \label{sect:pure}

For statements about $J$ and $J^c = \cX \setminus J$, it suffices
to consider a bipartite  quantum system with Hilbert
spaces $\cH_A$ and $\cH_B$.    It is well-known that
any pure state $\ket{\psi_{AB}}$ can be written in the
form
\beqa
   \ket{\psi_{AB}} = \sum_k \mu_k \ket{\phi_k^A} \ot \ket{\phi_k^B}
\eeqa
with $\mu_k > 0$ and $\{ \phi_k^A \}$ and $\{ \phi_k^B \}$
orthonormal.   Indeed, this is an immediate consequence 
of the isomorphism between  $\cH_A \ot \cH_B$ and 
$\cL(\cH_A,\cH_B)$, the set of linear operators from
$\cH_A$ to $\cH_B$, and the singular value decomposition.
It then follows that both $\rho_A$ and $\rho_B$ have the same
non-zero eigenvalues $\mu_k^2$, and hence $S(\rho_A) = S(\rho_B)$.

This motivates the process known as {\rm purification}.
Given a density matrix $\rho = \sum_k \lambda_k \proj{\phi_k}$,
one can define the bipartite  state
$$\ket{\psi} = \sum_k  \sqrt{\lambda_k} \,  \ket{\phi_k}  \ot  \ket{\phi_k }$$
whose reduced density matrix  $\trp_B  \, \proj{\psi} $ is $\rho$.

Therefore, every vector in an $N$-party quantum entropy cone 
$ \Sigma^Q_N $ can be obtained from the entropies of some
  reduced state of a  $(N+1)$-party pure state $\ket{\Psi}$.  In
  that case, we say that the entropy vector is realized by  $\ket{\Psi}$.

  In an abstract setting, we could define a cone   $\wtd{\Gamma}_\cX^Q$
   whose elements satisfy ($+$), \eqref{eq-ssa} and the complementarity
   property  $S(J)= S(J^c)$, and let $\Gamma_N^Q$ be the cone of vectors
  which arise as subvectors of $ \wtd{\Gamma}_{N+1}^Q$.    Although we
  will not need this level of abstraction, this correspondence is used in Section~\ref{sect:4party}.

\subsection{Group inequalities} \label{sect:group}
   
 Consider a (finite) group $G$ and a family of
subgroups $G_x < G$, $x\in{\cal X}$. Then,
$H(J ) = \log|G/G_J|$, with $G_J = \bigcap_{j\in J } G_j$
is a poly-matroid. In fact, Chan and Yeung~\cite{ChanYeung-group}
show that it is entropic because it can be realised by
the random variables $X_j = gG_j \in G/G_j$ for a uniformly
distributed $g\in G$.
The fact that for two subgroups $G_1,G_2 < G$, the mappings
\begin{align*}
  G/(G_1\cap G_2) \longrightarrow G/G_1 \times G/G_2  \qquad \text{and} \qquad 
  g(G_1\cap G_2)  \longmapsto     (gG_1,gG_2),
\end{align*}
are one-to-one~\cite{Suzuki:groups}, guarantees that indeed $H(X_J) = H(J)$.

Thus, the inequalities satisfied by poly-matroids, and more specifically
entropic poly-matroids give rise to relations between the cardinalities of
subgroups and their intersections in a generic group. Conversely,
Chan and Yeung \cite{ChanYeung-group} have shown that every such relation for groups,
is valid for all entropic poly-matroids.
This result motivates the search for a similar, combinatorial or group
theoretical, characterization of the von Neumann entropic poly-quantoids,
and our interest in stabiliser states originally grew out of it.

However, it must be noted that if some subgroups of $G$ are not general,
but, e..g,  normal subgroups as in Theorem~\ref{thm:normsubgp} below,
 then the Chan-Yeung correspondence
breaks down. In this case further inequalities
hold for the group poly-matroid that are not satisfied by entropic 
poly-matroids.

\section{Linear rank inequalities}  \label{sect:linrank}

\subsection{The Ingleton inequality}   \label{sect:ing}

The classic \emph{Ingleton inequality}, 
when stated in information theoretical
terms, and as manifestly balanced, reads
\begin{equation}
    \ing(AB : CD ) \equiv  I(A :B|C) + I(A:B|D) + I(C:D) - I(A:B) \geq 0,  \tag{ING}
  \label{eq:ingleton}
\end{equation}
where $A$, $B$, $C$ and $D$ are elements (more generally pairwise disjoint
subsets) of ${\cal X}$. It was discovered by Ingleton~\cite{Ingleton} as a constraint
on linearly represented matroids.

Although this inequality does not hold universally, it is of 
considerable importance, and continues to be studied \cite{MS,MTH,SO,M.poly},
particularly when reformulated as an inequality for subgroup ranks.
In Theorem~\ref{thm:stab-ingleton} we show that
\eqref{eq:ingleton} always holds for a special class of
states.   Before doing that, we give some basic properties
first. Observe that \eqref{eq:ingleton} is symmetric with respect to
the interchanges  $A \leftrightarrow B $ and $C \leftrightarrow D$,
so that it suffices to consider special properties only for $A$ and $D$.

Because it is not always easy to see if a 4-party state  
$\rho_{ABCD}$ is the reduction of a pure stabiliser state,
it is worth listing some easily checked conditions 
under which \eqref{eq:ingleton} holds.

\begin{proposition}  \label{thm:Ingprop}
The Ingleton inequality \eqref{eq:ingleton} 
holds if any one of the following conditions holds.

a) $\rho_{ABCD} = \proj{\psi_{ABCD}}$ is any pure 4-party state.

b) $\rho_{ABCD} = \rho_{ABC} \ot \rho_D$ or $\rho_{A} \ot \rho_{BCD}$

c)  The two-party component of the entropy vector for $ (\rho_{ABCD} ) $ is 
symmetric under a partial exchange between $(A,B) $ and $ (C,D)$, \emph{i.e.}
under any {\em one} (but not two) of the exchanges  $A \leftrightarrow C$,
$B \leftrightarrow D$, $A \leftrightarrow D$ or $B \leftrightarrow C$.

\end{proposition}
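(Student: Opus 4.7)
The plan is to begin with a common expansion of the Ingleton combination,
\begin{align*}
\ing(AB:CD) = S(AC)+S(BC)+S(AD)+S(BD)+S(AB)-S(CD)-S(ABC)-S(ABD)-S(A)-S(B),
\end{align*}
and then, under each of the three hypotheses, to rewrite this as a manifestly non-negative sum of SSA instances.

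For (a), I would invoke purity together with the complementarity identities from Section~\ref{sect:pure}, namely $S(ABC)=S(D)$, $S(ABD)=S(C)$, $S(AB)=S(CD)$, $S(AC)=S(BD)$ and $S(BC)=S(AD)$. These collapse the displayed expression to $2S(AC)+2S(AD)-S(A)-S(B)-S(C)-S(D)$. Two applications of SSA should then finish the job: $I(A:B|C)\geq 0$ combined with $S(ABC)=S(D)$ yields $S(AC)+S(AD)\geq S(C)+S(D)$, while $I(C:D|A)\geq 0$ yields $S(AC)+S(AD)\geq S(A)+S(B)$; adding the two gives exactly what is required.

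For (b), I would exploit the factorisation. If $\rho_{ABCD}=\rho_A\ot\rho_{BCD}$, then $A$ is independent of $BCD$ and every mutual information with $A$ on one side vanishes, so $\ing=I(C:D)\geq 0$. If instead $\rho_{ABCD}=\rho_{ABC}\ot\rho_D$, tensoring with $\rho_D$ forces $I(A:B|D)=I(A:B)$ and $I(C:D)=0$, leaving $\ing=I(A:B|C)\geq 0$. In both sub-cases the survivor is an instance of SSA.

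For (c), I would first use the manifest $A\leftrightarrow B$ and $C\leftrightarrow D$ symmetries of $\ing$ to identify all four single-exchange hypotheses, so that WLOG I may assume the $A\leftrightarrow C$ version, namely $S(AB)=S(BC)$ and $S(AD)=S(CD)$. Substituting these into the expansion above, $S(CD)$ cancels one copy of $S(AB)$ and the remaining eight terms should regroup as
\begin{align*}
\ing(AB:CD) = \bigl[S(AB)+S(AC)-S(ABC)-S(A)\bigr]+\bigl[S(AB)+S(BD)-S(ABD)-S(B)\bigr] = I(B:C|A)+I(A:D|B),
\end{align*}
non-negative by SSA. The main obstacle is locating this regrouping: the ten-term expansion admits many arrangements and only this one turns both blocks into a conditional mutual information under the given hypothesis. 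Parts (a) and (b) are essentially bookkeeping once purity or factorisation is exploited, so all the genuine conceptual work is concentrated in the small combinatorial step of identifying the right pairing in (c).
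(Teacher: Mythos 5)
Your proof is correct and follows essentially the same route as the paper's: part (a) amounts, via purity/complementarity, to the identity $\ing(AB:CD)=I(A:B|C)+I(C:D|A)$ (your two SSA instances are exactly these two terms); part (b) is the paper's case analysis verbatim; and part (c) is the paper's decomposition $\ing(AB:CD)=I(B:C|A)+I(A:D|B)+R$ with $R=S(BC)+S(AD)-S(CD)-S(AB)$ killed by the assumed exchange symmetry. No changes needed.
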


\begin{proof} To prove (a) it suffices to observe that
\beeqa   \label{eq:pureIng}
 \ing(AB : CD) & = &   I(A:B|C) 
         + S(AD) + S(BD) - S(D) - S(ABD)  \\
 & ~ & \quad  + \, S(C) + S(D) - S(CD) 
   - S(A) - S(B) + S(AB) \\
& = & I(A:B|C)   
    + S(AD) + S(AC) - S(A) - S(ACD)  \\
    & = &  I(A:B|C) +   I(C:D | A)  \geq 0 .
\eeeqa
To prove (b)  observe that when  $ \rho_{ABCD} = \rho_{ABC} \ot \rho_D$ then
 $ I(A:B|D) = I(A:B) $ and \linebreak $I(C:D) = 0$ so that  \eqref{eq:ingleton}  follows immediately
 from   \eqref{ssa}.   For $ \rho_{ABCD} = \rho_{A} \ot \rho_{BCD}$
 the first, second and last terms in \eqref{eq:ingleton}  are zero 
 so that it becomes simply $I(C:D) \geq 0$.
   
For (c)  we observe that 
\eqref{eq:ingleton} is equivalent to
\beqa   
  I(B:C |A) + I(A:D |B) + R \geq 0 \quad \hbox{with} \quad   
  R = S(BC) + S(A D) - S(C D) - S(A B) .
\eeqa
The exchange  $A \leftrightarrow C$ takes $R$ to $-R$.
Thus, if   $\rho_{ABCD} $ is symmetric under this exchange,
 then  $R = 0$ and \eqref{eq:ingleton} holds.    The sufficiency of the other
 exchanges can be shown similarly.
 \end{proof}

If (\ref{eq:ingleton}) holds, then all of the
Mat\'{u}\v{s}   inequalities \eqref{eq:Matus-s} also hold,
since they add only conditional mutual informations $I(X:Y|Z) \geq 0$ to it.
However, it is   well-known that entropies do not universally obey
the Ingleton inequality. A simple, well-known counterexample is given by
independent and uniform binary variables $C$ and $D$, and
$A = C \vee D$, $B = C \wedge D$.
Then the first three terms in \eqref{eq:ingleton} vanish, so
that $ \ing(AB : CD) = - I(A:B) < 0$.
  
To obtain a quantum state which violates Ingleton, let
$ \ket{\psi} = \tfrac{1}{\sqrt{2}} \big( \ket{0000}  + \ket{ 1111 }\big) $ and
\beqa  \label{quant_viol_Jng}
 \rho_{ABCD} = \half \proj{\psi} + \tfrac{1}{4} \proj{1010}  + \tfrac{1}{4} \proj{1001} .
 \eeqa
All  the reduced states  $\rho_{ABC}, \rho_{BD} $, etc. are separable
and identical to those of the   state
\beeqa
 \rho_{ABCD} & = &  \tfrac{1}{4} \proj{0000}  + \tfrac{1}{4} \proj{1111}  
     + \tfrac{1}{4} \proj{1010 }  + \tfrac{1}{4} \proj{1001 } .
  \eeeqa
corresponding to the classical example above.   Therefore \eqref{quant_viol_Jng}
violates the Ingleton inequality, but still satisfies all of the Mat\'{u}\v{s}
inequalities.   Note that the state  $\ket{\psi} $ is maximally entangled
wrt the splitting $A$ and $BCD$.
Additional quantum states with the same entropy vectors as
classical states which violate Ingleton \cite{MS,M2} can be similarly constructed.
However, we do not seem to know ``genuinely quantum'' counterexamples
to the Ingleton inequality.

\begin{question}
Do there exist quantum states which violate Ingleton and are neither
separable nor have the same entropy vectors as some classical state?
\end{question}

\subsection{Families of inequalities}   \label{sect:kinser}

When the subsystem $C$ or $D$ is trivial, the Ingleton
inequality reduces to the $3$-party SSA inequality,
$I(A:B |C) \geq 0 $;
when subsystem $A$ or $B$  is trivial, it reduces
to the $2$-party subadditivity inequality $I(C:D) \geq 0$.
This suggests that the Ingleton inequality is a member
of a more general family of $N$-party inequalities.
In 2011, Kinser \cite{Kinser} found the first such family, 
which can be written (for $N \geq 4$) as
\begin{align}  \label{eq:kinser}  
 K[N]  & =   I(1:N|3) + H(1N) - H(12) -H(3N) +H(23)  
     +   \sum_{k = 4}^{N} I(2:k-1 | k) \geq 0. 
\end{align}
This is equivalent to the Ingleton inequality when $N = 4$.

\begin{remark}
As in the proof of Proposition~\ref{thm:Ingprop}(c), 
it can be shown  that  Kinser's inequalities
hold if $\rho$ is symmetric with respect to the interchange $1 \leftrightarrow 3$
or $2 \leftrightarrow N$. They also hold if
$\rho_{1,2, \ldots N} = \rho_2 \ot \rho_{1,3, \ldots N} $
One can ask if   part (a) of Theorem~\ref{thm:Ingprop} 
can be extended to the new inequalities, \emph{i.e.},  do they hold
for  hold for $N$-party pure quantum states? 
\end{remark}

\subsection{Inequalities from common information}   \label{comminfo}

Soon after Kinser's work, another group   \cite{DFZ0910}
found new families of linear rank inequalities for poly-matroids
for $N > 4$ that are independent of both Ingleton's
inequality and Kinser's family.   In the $5$-party case,
they found a set of $24$ inequalities which generate
 all linear rank inequalities for poly-matroids.  
 Moreover, they give an algorithm
 which allows one to generate many more families of
 linear rank inequalities based on the notion of common information,
 considered  much earlier in  \cite{AG,AK,GK} and used below. 
However, it was shown in \cite{CGK2} that there are $N$-party linear 
rank inequalities that cannot be obtained from the process 
described in   \cite{DFZ0910}.
 
 \begin{definition}
  \label{defi:common-info}
  In a poly-matroid $H$ on ${\cal X}$, two subsets $A$ and $B$
  are said to \emph{have a common information}, if there
  exists an extension of $H$ to a poly-matroid on the larger set
  ${\cal X} \stackrel{.}{\cup} \{\zeta\}$, such that
$    H(\{\zeta\}\cup A) - H(A) =: H(\zeta|A)  = 0$, 
 $   H(\{\zeta\}\cup B) - H(B) =: H(\zeta|B)  = 0$
   and $H(\zeta)           = I(A:B)$.
\end{definition}

Here we used $H( Z | A) = H(AZ)  - H(A) $ to denote the conditional entropy.
For completeness we include a proof (courtesy of a Banff talk by Dougherty) 
of the following result,  
as well as a proof of  Lemma~\ref{lemma:mystery} below, which appear in \cite{DFZ0910}.
\begin{proposition}
  \label{prop:c-i-implies-ING}
  Let $h$ be a poly-matroid on ${\cal X}$, and $A,B,C,D \subset {\cal X}$
  such that $A$ and $B$ have a common information. Then
  the Ingleton inequality (\ref{eq:ingleton}) holds for $A$,
  $B$, $C$ and $D$.
\end{proposition}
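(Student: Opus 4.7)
The plan is to use the auxiliary element $\zeta$ to ``witness'' the mutual information $I(A:B)$, reducing the Ingleton inequality to strong subadditivity applied to $\zeta C$ and $\zeta D$ in the extended poly-matroid.

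First I would upgrade the defining conditions. The hypotheses give $H(\zeta|A)=0$ and $H(\zeta|B)=0$; applying SSA (which implies that conditional entropies are monotone non-increasing under further conditioning) yields $H(\zeta|AX)=0$ and $H(\zeta|BX)=0$ for every $X \subset \cX$. In particular $H(\zeta|AC)=H(\zeta|BC)=H(\zeta|AD)=H(\zeta|BD)=0$.

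Next, I would bound each conditional mutual information in \eqref{eq:ingleton} by an entropy of $\zeta$. Using the chain rule
\[
  I(A:B|C) = I(A:\zeta|C) + I(A:B|\zeta C) - I(A:\zeta|BC),
\]
together with $I(A:\zeta|BC)=H(\zeta|BC)-H(\zeta|ABC)=0$ and $I(A:B|\zeta C)\geq 0$, gives $I(A:B|C) \geq I(A:\zeta|C)$. Since $H(\zeta|AC)=0$, this simplifies to $I(A:B|C) \geq H(\zeta|C)$. The symmetric argument (interchanging $C$ with $D$) yields $I(A:B|D) \geq H(\zeta|D)$.

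The conclusion then reduces to an SSA computation. A direct expansion gives
\[
  H(\zeta|C)+H(\zeta|D)+I(C:D) = H(\zeta C)+H(\zeta D)-H(CD),
\]
and SSA applied to the sets $\zeta C$ and $\zeta D$ (whose intersection is $\{\zeta\}$ and whose union is $\zeta CD$) yields $H(\zeta C)+H(\zeta D) \geq H(\zeta CD)+H(\zeta)$. Combining with $H(\zeta|CD)\geq 0$ gives
\[
  H(\zeta|C)+H(\zeta|D)+I(C:D) \;\geq\; H(\zeta) \;=\; I(A:B),
\]
where the last equality is the third defining property of the common information. Adding the two lower bounds $I(A:B|C) \geq H(\zeta|C)$ and $I(A:B|D) \geq H(\zeta|D)$ to $I(C:D)$ therefore yields $\ing(AB:CD) \geq 0$.

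I do not anticipate a serious obstacle: once one recognizes that $\zeta$ plays the role of a deterministic function of both $A$ and $B$ (and hence of any superset of either) with entropy $I(A:B)$, everything reduces to chain rules and one application of SSA in the extended poly-matroid. The only subtlety is being careful that the poly-matroid axioms, rather than random variable intuition, are what actually justifies the conditional-entropy monotonicity used in the first step.
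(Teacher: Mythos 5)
Your proof is correct and follows essentially the same route as the paper's: both arguments substitute $\zeta$ for $A$ and for $B$ inside the conditional mutual informations, reduce the left-hand side of \eqref{eq:ingleton} to $H(\zeta|C)+H(\zeta|D)+I(C:D)$, and then bound this below by $H(\zeta)=I(A:B)$ via SSA. Your chain-rule step together with $H(\zeta|BC)=H(\zeta|ABC)=0$ is exactly the content of the paper's Lemma~\ref{lemma:mystery} specialised to $F=\zeta$, so the two proofs invoke the same SSA instances in slightly different packaging.
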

\begin{proof}
Let $\zeta$ be a common information of $A$ and $B$. Then, using
$H(F|A) \geq H(F|AC) $ in Lemma~\ref{lemma:mystery} below,
and letting $F = \zeta $, gives 
$$    I(A:B|C) + H(\zeta|A) \geq I(\zeta:B|C)   .$$
Using this a total of six times, we obtain
\[\begin{split}
  I(A:B|C) &    + I(A:B|D) + I(C:D) + 2H(\zeta|A) +2H(\zeta|B) \\
           &\geq  I(A:\zeta|C) + I(A:\zeta|D) + I(C:D) + 2H(\zeta|A) \\
           &\geq  I(\zeta:\zeta|C) + I(\zeta:\zeta|D) + I(C:D) \\
           &=     H(\zeta|C) + H(\zeta|D) + I(C:D)  
             \geq H(\zeta|C) + I(\zeta:D)
            \geq  I(\zeta:\zeta) = H(\zeta).
\end{split}\]
Inserting the conditions for $\zeta$ being a common information,
completes the proof.
\end{proof}

\begin{lemma}
  \label{lemma:mystery}
  In a poly-matroid $H$ on a set ${\cal X}$  with subsets $A,B,C,F \subset {\cal X}$.
\begin{align}   \label{eq:myst}
 I(A:B|C) + H(F|AC)  \geq I(F:B|C)  \end{align}
 \end{lemma}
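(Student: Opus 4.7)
The plan is to expand the three quantities using their definitions in terms of joint entropies, cancel the obvious common terms, and reduce the claim to a single inequality between four joint entropies which follows from one application of \eqref{ssa} combined with \eqref{mono}. Since poly-matroids satisfy both (SSA) and (MO), this should give the result without any cleverness.

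Concretely, I will first write
\begin{align*}
  I(A:B|C) &= H(AC) + H(BC) - H(ABC) - H(C), \\
  I(F:B|C) &= H(FC) + H(BC) - H(FBC) - H(C), \\
  H(F|AC)  &= H(FAC) - H(AC).
\end{align*}
Substituting these into $I(A:B|C) + H(F|AC) - I(F:B|C)$, the terms $H(AC)$, $H(BC)$, $H(C)$ cancel, and what remains is
\[
   I(A:B|C) + H(F|AC) - I(F:B|C) = H(FAC) + H(FBC) - H(ABC) - H(FC).
\]
Thus the claim reduces to showing $H(FAC) + H(FBC) \geq H(ABC) + H(FC)$.

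This is the main (in fact only) step. I would argue it in two lines. The sets $FAC$ and $FBC$ have union $FABC$ and intersection $FC$, so strong subadditivity gives
\[
   H(FAC) + H(FBC) \geq H(FABC) + H(FC).
\]
Since $ABC \subset FABC$, monotonicity yields $H(FABC) \geq H(ABC)$, so
\[
   H(FAC) + H(FBC) \geq H(ABC) + H(FC),
\]
which completes the proof. There is no real obstacle here; the only thing to be careful about is bookkeeping with the possibly non-disjoint sets $A,B,C,F$, but since the argument uses only the set-theoretic identities $FAC \cup FBC = FABC$, $FAC \cap FBC = FC$, and the inclusion $ABC \subset FABC$, which hold regardless of any disjointness assumptions, the same proof works in full generality.
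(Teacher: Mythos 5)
Your proof is correct and is essentially the paper's own argument: both reduce the claim to $H(FAC)+H(FBC)-H(FC)-H(ABC)\geq 0$ and establish this with one application of \eqref{ssa} and one of monotonicity, merely in the opposite order (the paper first bounds $H(ABC)\leq H(ABCF)$ and then recognises the result as $I(A:B|CF)\geq 0$). One pedantic remark: for non-disjoint $A,B$ the identity $FAC\cap FBC=FC$ you invoke can fail (the intersection is $FC\cup(A\cap B)$), but \eqref{ssa} then yields the lower bound $H(FABC)+H\bigl(FC\cup(A\cap B)\bigr)$ and one further use of monotonicity recovers $H(FC)$, so nothing breaks.
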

\begin{proof}
 By  a direct application of the poly-matroid axioms:
\begin{align}
 \lefteqn{ I(A:B|C) + H(F|A  C) - I(F:B|C) 
           =    H(B|F  C) - H(B|A  C) + H(F|A  C)} \quad \nn  \\
           &=    H(B  C  F) + H(A  C  F)    
                   - H(C  F) - H(A  B  C)               \\
           &\geq H(B  C  F) + H(A  C  F)    
                  - H(C  F) - H(A  B  C  F) \nn  \\
           &=    I(A:B|C  F) \geq 0,
\end{align}
where we used only algebraic identities, SSA and monotonicity.
\end{proof}

In a linearly represented poly-matroid, (\ref{eq:ingleton})
is universally true: There, $H(J)  = \dim V_J$, with
$V_J = \sum_{j\in J} V_j$ for a family of linear subspaces
$V_j  \subset V$ of a vector space. The common information of any
$A,B \subset {\cal X}$ is constructed by defining
$V_\zeta = V_A \cap V_B$.

\begin{theorem}   \label{thm:normsubgp}
Any linear rank inequality for a poly-matroid obtained using common
 information and the poly-matroid inequalities, also holds for a group
 poly-matroid when its defining subgroups are normal. 
\end{theorem}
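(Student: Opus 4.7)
The plan is to show directly that any group poly-matroid $H(J) = \log|G/G_J|$ arising from normal subgroups $G_x \triangleleft G$ admits a common information for every pair of subsets $A, B \subseteq \cX$, and that the corresponding extension is again a normal-subgroup group poly-matroid. Since the poly-matroid axioms hold for every group poly-matroid by the Chan--Yeung correspondence, this allows one to replay, in the group setting, any derivation of a linear rank inequality that uses only common-information constructions and poly-matroid inequalities.

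The key ingredient is the candidate $G_\zeta := G_A G_B$ for the common information of $A$ and $B$. As intersections of normal subgroups of $G$, both $G_A$ and $G_B$ are themselves normal in $G$, and the product of two normal subgroups is again a normal subgroup. Since $G_A \subseteq G_\zeta$, the intersection $G_A \cap G_\zeta$ equals $G_A$, so
\begin{align*}
H(\zeta | A) = \log\frac{|G|}{|G_A \cap G_\zeta|} - \log\frac{|G|}{|G_A|} = 0,
\end{align*}
and symmetrically $H(\zeta | B) = 0$. The product formula $|G_A G_B| \cdot |G_A \cap G_B| = |G_A| \cdot |G_B|$ then gives
\begin{align*}
H(\zeta) = \log\frac{|G|}{|G_A G_B|} = \log\frac{|G| \cdot |G_A \cap G_B|}{|G_A| \cdot |G_B|} = H(A) + H(B) - H(A \cup B) = I(A:B),
\end{align*}
verifying all three conditions of Definition~\ref{defi:common-info}. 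Crucially, the enlarged family $\{G_x\}_{x\in\cX} \cup \{G_\zeta\}$ still consists of normal subgroups of $G$, so the extended group poly-matroid on $\cX \cup \{\zeta\}$ is of the same type, and the construction can therefore be iterated as many times as needed.

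The theorem then follows by induction on the length of the derivation. At each step that introduces a common information, we realise it on the group side by the product-of-subgroups recipe above; at each step that invokes a poly-matroid axiom ((+), \eqref{ssa} or \eqref{mono}), we use the fact that these hold for every group poly-matroid. The main subtle point, rather than a genuine obstacle, is to fix precisely the proof calculus allowed: one needs the ``derivation using common information'' to consist of a finite sequence of pairwise common-information extensions interleaved with applications of the poly-matroid inequalities, with the final inequality involving only the original indices in $\cX$. Once this is made explicit, the product-of-subgroups construction transports the entire derivation into the group setting, so the resulting linear rank inequality holds there as well.
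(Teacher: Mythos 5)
Your proof is correct and follows essentially the same route as the paper: both take $G_\zeta = G_A G_B$ (normal because $G_A$ and $G_B$ are) as the common information and check the three conditions of Definition~\ref{defi:common-info}, the paper via the natural isomorphisms $(G_AG_B)/G_A \cong G_B/(G_A\cap G_B)$ and you via the equivalent product formula $|G_AG_B|\,|G_A\cap G_B| = |G_A|\,|G_B|$. Your explicit observation that the extended family still consists of normal subgroups, so the construction can be iterated along any derivation, merely spells out what the paper's ``it suffices to show'' leaves implicit.
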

\begin{proof}  
It suffices to show that when
 $G_A, G_B \vartriangleleft G$ are normal subgroups
for $A,B\subset {\cal X}$, then $A$ and $B$ have a common
information given by $G_\zeta = G_A G_B \vartriangleleft G$
(the latter from the normality of $G_A$ and $G_B$).
The first two conditions for a common information are
clearly satisfied, as $G_A, G_B \subset G_A G_B$, and
the third follows from the well-known natural isomorphisms 
\cite{Suzuki:groups}
\begin{align*}
  G/(G_A G_B)    \eqsim \raisebox{2pt}{$(G/G_A)$}{\Big/}\raisebox{-2pt}{\!$\bigl((G_A G_B)/G_A\bigr)$} \quad
                   \text{ and}   \quad
  (G_A G_B)/G_A &\eqsim G_B/(G_A\cap G_B),
\end{align*}
which imply
\begin{align*}
  H(\zeta) &= \log|G/(G_AG_B)|  
            = \log|G/G_A| - \log|(G_A G_B)/G_A| \\
           &= \log|G/G_A| + \log|G/G_B| - \log|G/(G_A\cap G_B)| 
            = I(A:B).    \qedhere
\end{align*}
\end{proof}

\section{Entropies of stabiliser states}   \label{sect:stab}

\subsection{Stabiliser groups and stabiliser states}    \label{stabdefs}

Motivated by the stabiliser states encountered in the extremal
rays of $\Sigma_2$, $\Sigma_3$ and $\Sigma_4$, we now focus on (pure) 
stabiliser states, \emph{i.e.}~$1$-dimensional quantum codes. 
Stabilizer codes have 
emerged in successively more general forms. We use the
formulation described by Klappenecker and 
R\"otteler~\cite{KlappiRoetti:nice-error-1,KlappiRoetti:nice-error-2}
(following Knill~\cite{Knill:error-groups})
which relies on the notion of \emph{abstract error group}:
This is a finite subgroup $W < {\cal U}(\cH)$ of the unitary
group of a (finite dimensional) Hilbert space $\cH$,
which satisfies the following axioms:
\begin{enumerate}
  \item The center $C$ of $W$ consists only of multiples
        of the identity matrix (``scalars''): $C \subset  {\bf C}\1$.
  \item $\wh{W} \equiv W/C$ is an abelian group of order $|\cH|^2$, called the
        \emph{abelian part} of $W$.
  \item For all $g\in W\setminus C$, $\tr g = 0$.
\end{enumerate}
Note that conditions 1 and 2 imply that $W$ is non-abelian; whereas
condition 2 says that the non-commutativity is played out only
on the level of complex phases: for $g,h \in W$,
\[
  gh = \omega(g,h)hg, \text{ with } \omega(g,h) \in  {\bf C}.
\]
Finally, condition 3 means that $g,h \in W$ in
different cosets modulo $C$ are orthogonal with respect to
the Frobenius (or Hilbert-Schmidt) inner product:
$\tr g^\dagger h = 0$.
It is known that $\wh{W}$ is a direct product of an abelian group $T$ with itself, 
such that $|T| = |\cH|$. 

\begin{example}[Discrete Weyl-Heisenberg group]
  \label{ex:weyl-heisenberg}
  \normalfont
  Let $\cH$ be a $d$-dimensional Hilbert space, 
  with a computational orthonormal basis
  $\{\ket{j}\}_{j=0}^d$. Define discrete Weyl operators
  \begin{align*}
    X\ket{j}  = \ket{j+1} \!\mod\! d , \qquad  \qquad
    Z\ket{j}  = e^{j\frac{2\pi i}{d}}\ket{j}.
  \end{align*}
  They are clearly both of order $d$, and congruent via the
  discrete Fourier transform.
    The fundamental commutation relation,
 $  XZ = e^{2\pi i/d} ZX $
ensures that the group $W$ generated by $X$ and $Z$ is finite,
  and indeed an abstract error group with center 
  $C = \left\{ e^{j\frac{2\pi i}{d}} : j=0,\ldots,d-1 \right\}$.
 \end{example} 

\medskip
Note that the tensor product of abstract error groups is again an abstract
error group.
Now, assume that each party $x\in {\cal X}$ of the 
composite quantum system can be associated with an abstract error group
$W_x < \mathcal{U}(\cH_x)$ of unitaries
with center  $C_x$, which satisfy $W_x \supset C_x \subset  {\bf C}\1$,
such that $\wh{W_x} = W_x/C_x$ is abelian and has cardinality $d_x^2 \equiv |\cH_x|^2$.
Let $W \equiv \bigotimes_{x\in{\cal X}} W_x$
be the tensor product abstract error group,
acting on $\cH = \bigotimes_{x\in{\cal X}} \cH_x$.
For any subgroup  $\Gamma < W$, we let  
$\wh{\Gamma} =  (C\Gamma)/C \simeq \Gamma/(\Gamma \cap C)$ denote 
the quotient of $\Gamma$ by the  center of $W$.

Stabiliser codes~\cite{Gott:PhD,CRSS} are subspaces of $\cH$ which are simultaneous 
eigenspaces of abelian subgroups of $W$.

Consider a maximal abelian subgroup $G < W$, 
which contains the center $C = \bigotimes_{x\in{\cal X}} C_x <  {\bf C}\1$ of $W$,
so that $\wh{G}=G/C$ has cardinality $\sqrt{| \wh{W} |} = |\cH| = \prod_{j=1}^N |\cH_j|$.
Since $G$ is abelian it has a common eigenbasis, each state of which is 
called a stabiliser state $\ket{\psi}$.  

More generally, let $G < W$ be any abelian subgroup of an abstract error
group $W < {\cal U}(\cH)$.
Since all $g\in G$ commute, they
are jointly diagonalisable: let $P$ be one of the maximal
joint eigenspace projections.  Then  for $g\in G$,
$gP = \chi(g)P$, 
for a complex number $\chi(g)$. Thus $\chi:G\longrightarrow {\bf C}$
is necessarily the character of a $1$-dimensional group representation, 
which gives rise to the following expression for $P$:
\beqa   \label{eq:project1}
  P = \frac{1}{|G|} \sum_{g\in G}  \, \overline{\chi(g)} \,  g.
\eeqa
If $\chi(g_0) = 1$ and $g  = c \, g_0 $ is in the coset $g_0 \, C$, then 
$c = \chi(g )$ and $ \overline{\chi(g )} \,g  = g_0$.
Thus, $G_0 = \{ g \in G : \chi(g) = 1 \} $ is a subgroup of $G$ 
isomorphic to $\wh{G} = G/C$ and \eqref{eq:project1} can
be rewritten as
\beqa     \label{eq:project2}
    P =  \frac{1}{|G_0|} \sum_{g\in G_0}  g .
\eeqa     
Since $g^{-1} = g^\dag$ this sum is self-adjoint, and
\beqa
   P^2 =  \frac{1}{|G_0|^2} \sum_{g, h \in G_0}  g h =  \frac{1}{|G_0|} \sum_{g\in G_0}  g = P,   \nonumber
\eeqa
so that \eqref{eq:project2} is indeed a projection.

\medskip\noindent
{\bf Note:} The above reasoning is true because we assumed that
$\chi(g)$ records the eigenvalues of $g$ on the eigenspace with
projector $P$; as such, it has the property $\chi(t\1)=t$ for $t\in {\bf C}$.
For a general character $\chi$, however, only
$G_0 < \chi^{-1}(1)$ holds.

Because of the importance of the case of rank one projections, 
we summarize the results above in the case of maximal
abelian subgroups.
\begin{theorem}  \label{thm:stabstaterep}
Let $G$ be a maximal abelian subgroup of an abstract error
group $W$ with center $C$.   Any simultaneous eigenstate of
$G$ can be associated with a subgroup $G_0 \simeq G/C$
for which
  $~\ds{  \proj{\psi}  = \frac{1}{|G_0|} \sum_{g \in G_0} g}$.
\qed
\end{theorem}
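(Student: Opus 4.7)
The plan is to leverage the formula $P = \frac{1}{|G_0|}\sum_{g \in G_0} g$ already derived in the preceding paragraphs for the joint eigenspace projection of any abelian subgroup $G < W$, and then to show that maximality of $G$ forces every such joint eigenspace to be one-dimensional. Once that is established, the projection onto the simultaneous eigenstate $\ket{\psi}$ must equal $P$, and the desired identity follows immediately.

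To see that $P$ has rank one, I would compute
\[
  \tr P = \frac{1}{|G_0|} \sum_{g\in G_0} \tr g,
\]
and exploit axiom 3 of an abstract error group, which says that every $g\in W\setminus C$ is traceless. Thus only contributions from $G_0 \cap C$ survive. Using the normalisation convention $\chi(t\1)=t$ highlighted in the Note just above the theorem, an element $t\1 \in C$ belongs to $G_0$ (i.e.\ satisfies $\chi(t\1)=1$) if and only if $t=1$. Hence $G_0\cap C = \{\1\}$, and $\tr P = \tr\1/|G_0| = |\cH|/|G_0|$.

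Next I would invoke the stated structural property of maximal abelian subgroups $G$ containing the full centre $C$: namely $|\widehat G| = |G/C| = \sqrt{|\widehat W|} = |\cH|$. Combined with the already-established isomorphism $G_0 \simeq G/C$, this yields $|G_0| = |\cH|$, so $\tr P = 1$. Being a self-adjoint projection of trace one, $P$ is rank one; since $\ket{\psi}$ lies in its range (being a joint eigenstate), we conclude $P = \proj{\psi}$, and substituting into the derived formula gives the claimed expression.

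The main obstacle, really only a bookkeeping one, is pinning down the identification $G_0\cap C = \{\1\}$ correctly, since this relies on the (nontrivial but already emphasised) convention that $\chi$ records genuine eigenvalues rather than an arbitrary character. All other ingredients are either purely combinatorial or have been assembled in the preceding paragraphs, so no substantive new computation is required beyond the trace argument above.
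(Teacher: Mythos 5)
Your proof is correct and follows essentially the same route as the paper, which states this theorem as a summary of the preceding derivation of $P=\frac{1}{|G_0|}\sum_{g\in G_0}g$ and leaves the rank-one claim implicit (to be read off from $|G_0|=|\wh{G}|=|\cH|$). Your trace computation $\tr P = |\cH|/|G_0| = 1$, resting on axiom 3 and the observation $G_0\cap C=\{\1\}$ (which correctly uses the convention $\chi(t\1)=t$), is a sound and explicit way to supply that missing step.
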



\begin{remark}\normalfont
The use of the trivial representation is not essential in
the expression above.
It was used only to define $G_0$.   Once this has been
done, one can use the (1-dim) irreducible representations
of $G_0$ to describe the orthonormal basis of stabiliser
states associated with $G$.   Let  $\chi_k(g) $ denote the
$d = |G_0| $ irreducible representations of $G_0$ and
define
\beqa   \label{eq:stab-irred}
     \proj{ \psi_k } = \tfrac{1}{|G_0|} \sum_{g \in G_0}  \chi_k(g) g.
\eeqa
Then the orthogonality property of group characters implies
that 
$  \tr \proj{ \psi_j } \proj{ \psi_k } = | \langle \psi_j | \psi_k \rangle |^2 = \delta_{jk}  $.
\end{remark}


\subsection{Entropies of stabiliser states}    \label{sect:stab.ent}

The next result seems to have been obtained independently by several 
groups  \cite{HEB,Cubitt-etal,DDD}.

\begin{proposition} 
  \label{prop:stab-ent}
  For a pure stabiliser state $\rho = \proj{\psi}$ with associated error
  group $G < W$, and any $J \subset {\cal X}$,  the entropy
 \beqa  
 S(J)= S(\rho_J)  =  \log\frac{ d_J}{|\wh{G_J}  |}.
 \eeqa
 Here, $d_J = \prod_{x\in J} d_x$ and
 $$G_J \equiv \left\{ \otimes_{x\in{\cal X}} \, g_x \in G : 
                                \forall x\not\in  J \ g_x = \1 \right\}  \subset G, $$
 and  $\wh{G_J} = G_J/C_J $ is the quotient of $G_J$ with respect to the center $C_J = G_J \cap C$.
\end{proposition}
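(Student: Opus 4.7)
My plan is to compute $\rho_J = \tr_{J^c}\proj{\psi}$ explicitly from the representation in Theorem~\ref{thm:stabstaterep}, show that it satisfies $\rho_J^2 = (|\wh{G_J}|/d_J)\,\rho_J$, and read off the entropy from the resulting eigenvalue structure.

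First, I would apply $\tr_{J^c}$ termwise to $\proj{\psi} = \frac{1}{|G_0|}\sum_{g \in G_0} g$. Each $g = \ot_x g_x \in G_0$ factorises, so $\tr_{J^c}(g) = (\ot_{x \in J}g_x)\prod_{x \notin J}\tr(g_x)$, and axiom~3 of the abstract error group kills every term in which some $g_x$ for $x \notin J$ is not a scalar. Only $g \in G_0 \cap CG_J$ survive, and the quotient $G \to G/C$, injective on $G_0$ by construction of $G_0$ as $\ker\chi$, puts these into bijection with $\wh{G_J} = G_J/C_J$.

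Next, I would parametrise the survivors: write $g = \gamma h$ with $\gamma\1 \in C$ and $h \in G_J$. Using $\chi(g)=1$ together with $\chi(\gamma\1)=\gamma$ forces $\gamma = \overline{\chi(h)}$, and the partial trace becomes
\[
  \tr_{J^c}(g) \;=\; \overline{\chi(h)}\,d_{J^c}\,h_J,
\]
where $h_J = \ot_{x \in J}h_x$ acts on $\cH_J$. A short check, using that $\chi(c_J)$ is a root of unity for $c_J \in C_J$, shows that $\overline{\chi(h)}h_J$ is unchanged by $h \mapsto c_J h$, so the expression descends cleanly to $\wh{G_J}$. Combining with $|G_0|=d_\cX = d_J d_{J^c}$,
\[
  \rho_J \;=\; \frac{1}{d_J} \sum_{\wh h \in \wh{G_J}} \overline{\chi(h)}\,h_J.
\]

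Finally, I would square and exploit the group law. Multiplication of two terms gives $\overline{\chi(h)}\overline{\chi(h')}\,h_J h'_J = \overline{\chi(hh')}(hh')_J$, and substituting $k = hh'$ collapses the double sum over $\wh{G_J}$ to yield
\[
  \rho_J^2 \;=\; \frac{|\wh{G_J}|}{d_J}\,\rho_J.
\]
Since $\rho_J$ is Hermitian, this forces $\rho_J = (|\wh{G_J}|/d_J)\,P_J$ for an orthogonal projection $P_J$; the normalisation $\tr\rho_J = 1$ then gives $\tr P_J = d_J/|\wh{G_J}|$, so $\rho_J$ is maximally mixed of rank $d_J/|\wh{G_J}|$, and $S(\rho_J) = \log(d_J/|\wh{G_J}|)$.

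The main obstacle will be the scalar bookkeeping in the middle step: identifying correctly which cosets of $C$ in $G$ survive the partial trace, and verifying cleanly that the phase-weighted restriction $\overline{\chi(h)}h_J$ is well-defined on the quotient $\wh{G_J}$. Once those manipulations are in place, the squaring step is a short combinatorial computation that mimics the projection identity already used in \eqref{eq:project2}.
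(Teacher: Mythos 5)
Your proposal is correct and follows essentially the same route as the paper: apply the partial trace to the group-sum representation of $\proj{\psi}$, use the trace-zero axiom to kill all terms whose components outside $J$ are non-scalar, identify the survivors with $\wh{G_J}$, and conclude that $\rho_J$ is a normalised projector of rank $d_J/|\wh{G_J}|$. The only difference is presentational: you carry out the phase bookkeeping in $G_0$ and verify the projector property by squaring explicitly, whereas the paper works directly with a sum over coset representatives of $\wh{G}$ and cites the earlier projection identity \eqref{eq:project2}.
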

\begin{proof}
  It is enough to consider a bipartite  system with local
error groups $W_A$ and $W_B$, by considering party $A$ all
  systems in $J$, and $B$ all systems in ${\cal X}\setminus J$.
  Then,
  \[
    \proj{\psi} = \frac{1}{|\wh{G}|} \sum_{(g_A,g_B) \in \wh{G}} g_A \otimes g_B.
  \]
Since   $\tr g_B = 0$ unless  $g_B = \1$
  and   $ |\wh{G}| = d_A d_B $, this implies
  \[\begin{split}
    \rho_A &=     \trp_B \, \proj{\psi} = \frac{1}{|\wh{G}|} \sum_{(g_A,g_B) \in \wh{G}} (\tr g_B) \, g_A \\
           &= \frac{1}{|\wh{G}|} \sum_{g_A \in \wh{G_A}} |\cH_B|  \, g_A \\
           &= \frac{1}{|\cH_A|} \sum_{g_A \in \wh{G_A}} g_A
            = \frac{|\wh{G_A}|}{d_A}  \bigg( \frac{1}{|\wh{G_A}|} \sum_{g_A \in \wh{G_A}} g_A \bigg).
  \end{split}\]
   Since, $ \tr \rho_A = 1 $ , the last line implies that $\rho_A$ is  proportional to a
  projector of rank $\tfrac{d_A} {|\wh{G_A}|}$.
  Thus, its entropy is simply
  $  S(\rho_A) =  \log \frac{d_A}{|\wh{G_A} | } $.  
  \end{proof}
 
The following corollary is the key to our main result, Theorem~\ref{thm:stab-ingleton}.
\begin{corollary} 
  \label{cor:stab-S}
  For a pure stabiliser state as in Proposition~\ref{prop:stab-ent},
  the entropy of the reduced state $\rho_J$ satisfies
  \beqa
   S(J)= S(\rho_J) = \log \dfrac{| \wh{G} |  }{|  \wh{G_{J^c} }| } - \log d_J
                   = \log \dfrac{| \wh{G} |  }{|  \wh{G_{J} }| } - \log d_{J^c}.
  \eeqa
\end{corollary}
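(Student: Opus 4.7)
The plan is to derive both equalities directly from Proposition~\ref{prop:stab-ent} together with two elementary facts. The first is that, because $G$ is a maximal abelian subgroup of $W$ containing the center, the cardinality computation recorded just before the proposition gives $|\wh{G}| = |\cH| = d_{\cal X}$, and hence $|\wh{G}| = d_J \, d_{J^c}$ for every $J \subset {\cal X}$. The second is the purity identity $S(J) = S(J^c)$, established in Section~\ref{sect:pure} via the Schmidt decomposition of the pure state $\ket{\psi}$.

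For the second equality I would simply substitute $|\wh{G}| = d_J d_{J^c}$ into $\log \frac{|\wh{G}|}{|\wh{G_J}|} - \log d_{J^c}$ and cancel the $d_{J^c}$ factors; the result is $\log (d_J / |\wh{G_J}|)$, which equals $S(J)$ by Proposition~\ref{prop:stab-ent}.

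For the first equality I would apply Proposition~\ref{prop:stab-ent} to the complementary set $J^c$ (the pure stabiliser state is symmetric under $J \leftrightarrow J^c$ as a bipartition), obtaining $S(J^c) = \log(d_{J^c}/|\wh{G_{J^c}}|)$. The same $|\wh{G}| = d_J d_{J^c}$ substitution rewrites the right-hand side as $\log \frac{|\wh{G}|}{|\wh{G_{J^c}}|} - \log d_J$. Purity of $\ket{\psi}$ then converts $S(J^c)$ into $S(J)$, completing the claim.

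No genuine obstacle arises in any individual step; the substantive content of the corollary is entirely packaged inside Proposition~\ref{prop:stab-ent}, and the work here amounts to rewriting its conclusion symmetrically in $J$ and $J^c$. The only point that deserves attention is the use of purity, which is what permits the otherwise $J$-specific quantity $\log(d_J/|\wh{G_J}|)$ to be re-expressed in terms of $\wh{G_{J^c}}$; without it, only the second equality would be immediate.
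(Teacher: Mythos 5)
Your argument is correct and follows essentially the same route as the paper: the paper's proof likewise reduces to the bipartite case, invokes purity to write $S(\rho_A)=S(\rho_B)=\log(d_B/|\wh{G_B}|)$ via Proposition~\ref{prop:stab-ent}, and then uses $d_A d_B = |\wh{G}|$ to rewrite this as $\log\frac{|\wh{G}|}{|\wh{G_B}|}-\log d_A$. Your observation that the second equality needs only the cardinality identity while the first additionally needs purity is accurate, and is merely a finer bookkeeping of the same two ingredients.
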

\begin{proof}
  As in Proposition~\ref{prop:stab-ent}, it suffices to consider the bipartite
  case.   Since    $\proj{\psi}$ is pure, 
\beeqa
   S(\rho_A) =  S(\rho_B) = \log \frac{d_B}{| \wh{G_B }|}  
           =  \log\frac{d_A d_B}{| \wh{G_B } |} - \log d_A .
\eeeqa
Since $d_A d_B =   |\wh{G} | $ this gives the desired result. 
\end{proof}

\begin{theorem}
  \label{thm:stab-ingleton}
  Any pure stabiliser state $\rho = \proj{\psi}$
  on an $5$-party system gives rise to $4$-party reduced states whose entropies
  satisfy the Ingleton inequality.    
\end{theorem}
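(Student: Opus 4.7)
The strategy is to reduce the claimed Ingleton inequality to a four-party \emph{linear rank inequality for a group poly-matroid whose defining subgroups are normal}, and then invoke Theorem~\ref{thm:normsubgp}. The key is that the stabiliser $G$ is abelian, so every subgroup of $\wh{G}=G/C$ is normal.

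\textbf{Step 1 (Entropy as a group index).} By Proposition~\ref{prop:stab-ent}, for every $J\subset\cX = \{A,B,C,D,E\}$ the 5-party pure stabiliser state $\ket{\psi}$ has
\[
  S(J) = \log d_J - \log|\wh{G_J}|,
\]
where $G_J < G$ consists of stabiliser elements trivial on $\cX\setminus J$. Since $G$ is a \emph{maximal} abelian subgroup of $W$, $|\wh{G}|=d:=\prod_{x\in\cX}d_x$.

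\textbf{Step 2 (Chan-Yeung poly-matroid with normal subgroups).} For each $x\in\cX$, let $\wh{\tilde G_x}$ be the image in $\wh{G}$ of the subgroup $G_{\cX\setminus\{x\}}$ of stabiliser elements trivial at the $x$-th tensor factor. These five subgroups of $\wh{G}$ are all normal, because $\wh{G}$ is abelian. One checks directly that $\bigcap_{x\in J}\wh{\tilde G_x} = \wh{G_{\cX\setminus J}}$, so the Chan-Yeung group poly-matroid is
\[
  h(J) \;:=\; \log\frac{|\wh{G}|}{\bigl|\bigcap_{x\in J}\wh{\tilde G_x}\bigr|}
       \;=\; \log d - \log|\wh{G_{\cX\setminus J}}|.
\]

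\textbf{Step 3 (Matching $h$ and $S$ up to balanced terms).} Using Step~1 to eliminate $\log|\wh{G_{\cX\setminus J}}|$ and then the purity identity $S(\cX\setminus J)=S(J)$ (Section~\ref{sect:pure}) gives
\[
  h(J) \;=\; \log d_J + S(\cX\setminus J) \;=\; \log d_J + S(J).
\]

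\textbf{Step 4 (Invoke Theorem~\ref{thm:normsubgp}).} Proposition~\ref{prop:c-i-implies-ING} derives Ingleton purely from common-information constructions and the poly-matroid axioms, and common information always exists in a linearly represented poly-matroid. Since all $\wh{\tilde G_x}$ are normal, Theorem~\ref{thm:normsubgp} applies to $h$: for any pairwise disjoint $A,B,C,D\subset\cX$,
\[
  \ing(AB:CD)[h] \;\geq\; 0.
\]

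\textbf{Step 5 (Cancel the dimension correction).} The Ingleton inequality is balanced: for every singleton $x$, the coefficients of the sets $J\ni x$ sum to zero. Hence the contribution $\sum c_J \log d_J$ arising from Step~3 vanishes, and
\[
  \ing(AB:CD)[S] \;=\; \ing(AB:CD)[h] \;\geq\; 0,
\]
which is the claim for the four-party reduced state obtained by ignoring one party of $\ket{\psi}$.

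The main obstacle is the clean bridge of Step~3: one must simultaneously exploit the maximality of $G$ (to get $|\wh{G}|=d$) and purification (to flip $\cX\setminus J$ back to $J$), so that the Chan-Yeung group quantity $h(J)$ differs from the quantum entropy $S(J)$ only by the balanced term $\log d_J$. Once that identity is in place, the rest is essentially an application of Theorem~\ref{thm:normsubgp} together with the abelian (hence normal) structure of $\wh{G}$.
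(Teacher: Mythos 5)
Your proposal is correct and follows essentially the same route as the paper: you rewrite $S(J)$ as a Chan--Yeung group entropy of intersections of subgroups of the abelian group $\wh{G}$ (your Steps 1--3 in effect re-derive Corollary~\ref{cor:stab-S}), obtain Ingleton for that group poly-matroid via normality (Theorem~\ref{thm:normsubgp}, one of the two justifications the paper cites), and cancel the $\log d_J$ correction using balancedness. This matches the paper's proof step for step.
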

\begin{proof}
By Corollary~\ref{cor:stab-S}, we have
\begin{equation}
  \label{eq:stab-S-rewrite}
  S(J)= \log\frac{|\wh{G}|}{| \wh{G_{J^c} }|} - \sum_{x\in J} \log d_x  \, .
\end{equation}
The first term $H(J)  = \log\frac{|\wh{G}|}{| \wh{G_{J^c} }|}$ is a Shannon entropy
of the type used in~\cite{ChanYeung-group}.   To be precise, 
observe that  $ \wh{G_{J^c} } = \bigcap_{x\in J} \wh{G_{\cX \backslash x} }$.
Moreover, since $\wh{G}$ and its subgroups $\wh{G_{J^c} }$ are abelian,
this implies that the entropy vector for each of the 4-party reduced states
satisfies  the Ingleton inequality. (This was observed in \cite{ChanYeung-group} 
and also follows from Theorem~\ref{thm:normsubgp}.)

To complete the argument, it suffices to observe that the Ingleton
inequality is balanced, so that the Ingleton expression is
identically zero for the sum-type ``rank function'' from the
second term in  \eqref{eq:stab-S-rewrite}, \emph{i.e.}
$h_0(J) \equiv  \sum_{x \in J} \log d_x $ defines a poly-matroid satisfying
(\ref{eq:ingleton}) with equality.
\end{proof}

Any linear combination of mutual informations and conditional mutual informations
is a balanced expression (and vice versa, any balanced expression can be written as
such a linear combination). Kinser's family of inequalities is balanced, which can
be seen by inspection of \eqref{eq:kinser}.    
It also   holds by construction
for the   inequalities obtained from
\cite[Thms.~3 and 4]{DFZ0910} and, more generally, any inequality obtained
using a ``common information'' as in  \cite{DFZ0910}.
Therefore, we can conclude using the same argument as above that
\begin{theorem}  
   Any pure stabiliser state on an $(N+1)$-party
   system generate an $N$-party entropy vector which
   satisfies the Kinser  \cite{Kinser}  family \eqref{eq:kinser} of inequalities,
   and more generally those of Dougherty et al.~\cite{DFZ0910}.
\end{theorem}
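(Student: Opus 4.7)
The plan is to mirror exactly the proof strategy of Theorem~\ref{thm:stab-ingleton}, exploiting the decomposition of the stabiliser entropy into a group-theoretic Shannon part plus a modular (additive) correction, and then invoking the fact that the target inequalities are \emph{balanced}.

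First, I would start from Corollary~\ref{cor:stab-S}, which lets me write
\[
  S(J) = H(J) - h_0(J), \quad \text{where} \quad
  H(J) = \log\frac{|\wh{G}|}{|\wh{G_{J^c}}|},
  \quad h_0(J) = \sum_{x\in J}\log d_x.
\]
The function $h_0$ is the rank function of a trivial (``free'') poly-matroid: it is fully additive over disjoint unions, so every mutual information and every conditional mutual information evaluated on $h_0$ is identically zero. Consequently, for any \emph{balanced} linear expression $L$ in the entropies (one that can be written as a nonneg\-a\-tive linear combination of terms of the form $I(X{:}Y)$ and $I(X{:}Y|Z)$), one has $L(h_0) \equiv 0$, and hence $L(S) = L(H)$.

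Next I would handle the $H$ term. Because $\wh{G}$ is abelian, every subgroup $\wh{G_{J^c}}$ is automatically normal in it. Chan--Yeung's construction (Section~\ref{sect:group}) then shows that $H$ is an entropic poly-matroid, and Theorem~\ref{thm:normsubgp} further guarantees that any linear rank inequality obtained from the poly-matroid axioms together with common-information extensions---which, per the remarks preceding the theorem, include the entire Kinser family~\eqref{eq:kinser} and all the inequalities produced by the algorithm of Dougherty, Freiling and Zeger~\cite{DFZ0910}---must hold for $H$.

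To finish, I would simply observe that Kinser's inequalities are manifestly balanced (inspection of~\eqref{eq:kinser} shows they are a sum of $I(\cdot{:}\cdot|\cdot)$ terms plus the balanced combination $H(1N)-H(12)-H(3N)+H(23)$, which one rewrites as $I(1{:}N)-I(1{:}2)+I(2{:}3)-I(3{:}N)$ or an equivalent conditional-mutual-information form), and that the DFZ inequalities are balanced by construction, being produced from poly-matroid axioms and common-information conditions. Combining the two ingredients---$L(S)=L(H)$ from balance, and $L(H)\geq 0$ from Theorem~\ref{thm:normsubgp}---yields $L(S)\geq 0$, which is the desired conclusion.

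The main obstacle I anticipate is purely bookkeeping: making sure that the specific form of each DFZ inequality really is balanced and really does arise from the common-information procedure so that Theorem~\ref{thm:normsubgp} applies. Once that verification is in place (and for Kinser it is immediate from~\eqref{eq:kinser}), no genuinely new argument is needed beyond what was already deployed for the Ingleton case.
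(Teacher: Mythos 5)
Your proposal is correct and follows essentially the same route as the paper: decompose $S = H - h_0$ via Corollary~\ref{cor:stab-S}, kill the additive part $h_0$ using balancedness, and apply Theorem~\ref{thm:normsubgp} (with all subgroups of the abelian $\wh{G}$ automatically normal) to the Chan--Yeung group poly-matroid $H$. One small quibble: \emph{balanced} means expressible as an \emph{arbitrary} (not necessarily nonnegative) linear combination of mutual and conditional mutual informations --- your own rewriting of the Kinser terms uses negative coefficients --- but since $L(h_0)=0$ holds term by term regardless of sign, this does not affect the argument.
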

A consequence of Theorem~\ref{thm:stab-ingleton}
is that the \mat family of inequalities
holds for stabiliser states; however, rays generated by the
stabiliser state entropy vectors do not span the entropy cone 
$\ovb{\Sigma}^Q_4$.
In fact, from the proof of Theorem~\ref{thm:stab-ingleton}, we
see that \emph{every} balanced inequality that holds for the Shannon entropy,
holds automatically for stabilizer quantum entropies.\footnote{We 
are grateful to D. Gross and M. Walter, whose paper~\cite{GrossWalter} made us 
aware of this observation.}
Note also that apart from \eqref{mono}, all other necessary entropy inequalities for
the Shannon entropy are balanced~\cite{Chan:balanced}.


\section{The 4-party quantum entropy cone}  \label{sect:4party}

By direct calculation using symbolic software, we can
compute the extreme rays of $4$-party poly-quantoids plus Ingleton inequalities.
The results are given (up to permutation) as rays 0 to 6 in Table~1 below,
as elements of the 5-party  cone $\wtd{\Gamma}_{4+1}^Q$
(on subsets of $\{a,b,c,d,e\}$) of vectors which satisfy
($+$) \eqref{eq-ssa} and the complementarity
property  $S(J)= S(J^c)$ as described at the end of Section~\ref{sect:pure}.

\begin{table}[h]
\begin{center}
\begin{tabular}{l||c|c|c|c|c|c||c}
\raisebox{-2pt}{subset}$\big\backslash$\raisebox{2pt}{ray}
                         & 1 & 2 & 3 & 4 & 5 & 6 & 0 \\ \hline\hline
\ a                      & 1 & 1 & 1 & 1 & 2 & 1 & 1 \\ \hline
\ b                      & 1 & 1 & 1 & 1 & 1 & 1 & 1 \\ \hline
\ c                      & 0 & 1 & 1 & 1 & 1 & 2 & 1 \\ \hline
\ d                      & 0 & 1 & 1 & 1 & 1 & 2 & 2 \\ \hline
\ e ($\widehat{=}$ abcd) & 0 & 0 & 0 & 1 & 1 & 2 & 2 \\ \hline
ab                       & 0 & 1 & 2 & 1 & 3 & 2 & 2 \\ \hline
ac                       & 1 & 1 & 2 & 1 & 3 & 3 & 2 \\ \hline
ad                       & 1 & 1 & 2 & 1 & 3 & 3 & 2 \\ \hline
ae ($\widehat{=}$ bcd)   & 1 & 1 & 1 & 1 & 3 & 3 & 2 \\ \hline
bc                       & 1 & 1 & 2 & 1 & 2 & 3 & 2 \\ \hline
bd                       & 1 & 1 & 2 & 1 & 2 & 3 & 2 \\ \hline
be ($\widehat{=}$ acd)   & 1 & 1 & 1 & 1 & 2 & 3 & 2 \\ \hline
cd                       & 0 & 1 & 2 & 1 & 2 & 2 & 2 \\ \hline
ce ($\widehat{=}$ abd)   & 0 & 1 & 1 & 1 & 2 & 2 & 2 \\ \hline
de ($\widehat{=}$ abc)   & 0 & 1 & 1 & 1 & 2 & 2 & 2
\end{tabular}
\smallskip
\caption{Extreme rays of the 4-party quantum Ingleton cone}
\end{center}
\end{table}

The following stabiliser states found by Ibinson \cite{Ibinson-PhD}
(some of which were known earlier)
realise entropy vectors on the rays 1 through 6 shown in Table~1.
\begin{align}
  \label{eq:R1}
  \ket{\psi_1} &= \frac{1}{\sqrt{2}}\bigl(\ket{00}+\ket{11}\bigr)_{ab} \ket{000}_{cde}, \tag{R1} \\
  \label{eq:R2}
  \ket{\psi_2} &= \frac{1}{\sqrt{2}}\bigl(\ket{0000}+\ket{1111}\bigr)_{abcd} \ket{0}_e, \tag{R2} \\
  \label{eq:R3}
  \ket{\psi_3} &= \frac{1}{3} \sum_{i,j=0,1,2} \ket{i}_a \ket{j}_b \ket{i\oplus j}_c
                                                          \ket{i\oplus 2j}_d \ket{0}_e, \tag{R3} \\
  \label{eq:R4}
  \ket{\psi_4} &= \frac{1}{\sqrt{2}}\bigl(\ket{00000}+\ket{11111}\bigr)_{abcde},        \tag{R4} \\
  \label{eq:R5}
  \ket{\psi_5} &= \frac{1}{\sqrt{2}}\bigl(\ket{0}_{a'}\ket{0_L}_{a''bcde}
                                           +\ket{1}_{a'}\ket{1_L}_{a''bcde}\bigr),      \tag{R5} \\
  \label{eq:R6}
  \begin{split}
  \ket{\psi_6} &= \frac{1}{\sqrt{27}} \sum_{i,j,k=0,1,2} \ket{i}_a \ket{j}_b
                                                         \ket{i\oplus j}_{c'}\ket{k}_{c''}                                                           \ket{i\oplus j}_{d'}\ket{k}_{d''}
                                                         \ket{i\oplus j}_{e'}\ket{k}_{e''},
  \end{split}                                                                           \tag{R6}
\end{align}
where in eq.~(\ref{eq:R5}), $\ket{0_L}$ and $\ket{1_L}$ are the logical $0$
and $1$ on the famous $5$-qubit code~\cite{Laflamme:5qubit,BDSW}.
These are also extremal rays of the quantum entropy cone $\Sigma^Q_4$. In addition,
ray 0 in Table~1 is realised by the (stabiliser!) state
\begin{equation}
  \label{eq:state-0}
  \ket{\psi_0}
    = \frac{1}{2}\sum_{i,j=0,1} \ket{i}_A\ket{j}_B\ket{i\oplus j}_C\ket{ij}_D\ket{ij}_E. \tag{R0}
\end{equation}
on $1+1+1+2+2$ qubits.

Let us call an $N$-party poly-quantoid \emph{stabiliser-represented}, if
it is in the closure of the cone generated by the entropy
vectors of  $(N+1)$-party stabiliser states in the sense used above.
Then the above reasoning proves the following analogue of a theorem by
Hammer, Romashchenko, Shen and Vereshchagin~\cite{HRSV}:
\begin{theorem}
  \label{thm:stabiliser-quantoids}
  A $4$-party poly-quantoid is stabiliser-represented if and only if
  it satisfies the Ingleton inequality (and all its permutations).
  \qed
\end{theorem}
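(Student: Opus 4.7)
The necessity direction is immediate from Theorem~\ref{thm:stab-ingleton}: any pure 5-party stabiliser state produces a 4-party reduced entropy vector satisfying $\ing(AB:CD) \geq 0$, and by the permutation-equivariance of the stabiliser formalism (relabeling of parties takes stabiliser states to stabiliser states), all three Ingleton inequalities obtained by permuting $\{A,B,C,D\}$ hold simultaneously. These are linear constraints, so they pass to the closed convex cone generated by such entropy vectors, yielding the ``only if'' inclusion.

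For sufficiency, the plan is to exhibit a finite generating set of extreme rays for the 4-party Ingleton cone and realize each one by a pure 5-party stabiliser state. Working in the purified picture of Section~\ref{sect:pure}, i.e., in $\wtd{\Gamma}_{4+1}^Q$ subject to complementarity $S(J) = S(J^c)$, the Ingleton cone is a polyhedron in a 15-dimensional effective space, cut out by (+), SSA, and the three permutation orbits of the Ingleton inequality on $\{a,b,c,d\}$. A vertex-enumeration computation via symbolic software yields, modulo permutation of the five parties, exactly the seven families displayed as rays $0$ through $6$ in Table~1. For each orbit I then exhibit one of the pure stabiliser states in (R0)--(R6); to verify that $\ket{\psi_k}$ realizes the claimed ray, one extracts the maximal abelian subgroup $G \leq W$ stabilizing $\ket{\psi_k}$, computes $\wh{G_J}$ for every $J \subset \{a,b,c,d,e\}$, and reads off $S(\rho_J) = \log(d_J/|\wh{G_J}|)$ from Proposition~\ref{prop:stab-ent} (equivalently Corollary~\ref{cor:stab-S}). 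The resulting vectors are checked against Table~1 column by column. Combining, every Ingleton-respecting 4-party poly-quantoid is a non-negative combination of these rays, hence lies in the closure of the stabiliser-represented cone.

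The main obstacle is the vertex-enumeration step: the defining inequalities (all SSA facets on five parties, together with the Ingleton facets on every choice of four of the five parties) form a large system, so certifying that no extreme ray beyond $0$--$6$ has been overlooked requires either trust in a reliable symbolic-computation routine or a careful symmetry-reduced case analysis exploiting the $S_5$-action on parties. Once the list is in hand, the identification of each ray with a stabiliser state from (R0)--(R6) is a routine application of Proposition~\ref{prop:stab-ent}, and the matching to Table~1 is immediate.
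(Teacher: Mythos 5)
Your proposal is correct and follows essentially the same route as the paper: necessity via Theorem~\ref{thm:stab-ingleton} (plus closure under taking nonnegative combinations, since the constraints are linear), and sufficiency by symbolically enumerating the extreme rays of the $4$-party Ingleton cone in the purified picture (Table~1) and realizing each ray, up to the $S_5$ permutation action, by one of the stabiliser states (R0)--(R6) using Proposition~\ref{prop:stab-ent}. One small correction: $\ing(AB:CD)$ is symmetric only under $A\leftrightarrow B$ and $C\leftrightarrow D$, not under exchanging the pair $(A,B)$ with $(C,D)$, so on four parties there are six distinct permuted instances rather than ``three permutation orbits''; with that count fixed, your vertex-enumeration step is exactly the computation the paper reports.
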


It seems reasonable to conjecture that the closure of the cone 
generated by the entropy vectors of stabiliser states is identical to that 
obtained when inequalities obtained from common information
as in \cite{DFZ0910} are added to the classical ones. However, it
is not even clear if stabiliser states satisfy the additional linear rank
inequalities shown to exist in \cite{CGK2}.

%

\section{Conclusion}

The difficult question of whether or not the quantum entropy
satisfies inequalities beyond positivity and SSA remains open
for four or more parties.

Do quantum states which do not satisfy Ingleton always
lie within the classical part of the quantum entropy cone?
We know that the quantum entropy cone $\ovb{\Sigma}_\cX^Q$
is strictly larger than the classical one $\ovb{\Sigma}_\cX^C$.
Recall that $\Lambda_4^{C,Q}$  denotes the
polyhedral cones formed from the classical inequalities (in
each case) and the Ingleton inequality.
We want to know whether or not   $\ovb{\Sigma}_4^Q \setminus  \Lambda_4^Q$ 
is strictly larger than  $\ovb{\Sigma}_4^C \setminus  \Lambda_4^C$, \emph{i.e.},  
are there quantum states whose entropy vectors do not satisfy the
Ingleton inequality and are not equal to any vector in the closure
of the classical entropy cone, $\ovb{\Sigma}_4^C$?  If the
answer is negative, then $4$-party quantum entropy vectors
must also satisfy the new non-Shannon inequalities. 

It seems that a better
understanding of quantum states which do not satisfy \eqref{eq:ingleton}
may be the key to determining whether or not quantum states 
satisfy the classical non-Shannon inequalities.  

This question extends naturally to the 5-party case, in which all
linear rank inequalities are known from  \cite{DFZ0910}.
However, for more parties,
one can ask the same question for both the cones associated
with inequalities obtained using one common information as in  \cite{DFZ0910},
and for the cones obtained using all linear rank inequalities.
Although we know from \cite{CGK,CGK2} that additional
inequalities are required, we do not even have explicit examples to consider.

\subparagraph*{Related work} 
After completion of the present research, we became aware
of independent work by Gross and Walter \cite{GrossWalter}, who use discrete
phase space methods for stabilizer states to show that the entropies of stabilizer 
states satisfy all balanced classical entropy inequalities. 
Indeed, this can also be seen
from our formula for the reduced state entropies in Corollary~\ref{cor:stab-S}.

\subparagraph*{Acknowledgements}
Portions of this work were done when FM, MBR and AW were 
participating in workshops at the IMS, National University of Singapore,
and at the Center for Sciences ``Pedro Pascual'' in Benasque, Spain.
This collaboration was a direct consequence of their participation in
a workshop on matroids held at the Banff International Research Station
in 2009.

The work of FM was partially supported by Grant Agency of the
Czech Republic under Grant 13-20012S.
The work of MBR was partially supported by NSF grant CCF-1018401.
NL and AW acknowledge financial support by the European Commission
(STREP ``QCS'' and Integrated Project ``QESSENCE''), AW furthermore
support by the ERC
(Advanced Grant ``IRQUAT''), a Royal Society Wolfson Merit Award
and a Philip Leverhulme Prize. The Centre for Quantum Technologies
is funded by the Singapore Ministry of Education and the National
Research Foundation as part of the Research Centres of Excellence
programme.

\end{document}